\newtheorem{theorem}{Theorem}[section]
\newtheorem*{namedtheorem}{\theoremname}
\newcommand{\theoremname}{testing}
\newenvironment{named}[1]{ \renewcommand{\theoremname}{#1} \begin{namedtheorem}} {\end{namedtheorem}}
\newtheorem{claim}[theorem]{Claim}
\newtheorem{proposition}[theorem]{Proposition}
\theoremstyle{definition}
\newtheorem{definition}[theorem]{Definition}
\theoremstyle{plain}
\newtheorem{Alg}{Algorithm}
\renewenvironment{proof}{\noindent{\textbf{Proof:}}} {$\blacksquare$\vskip \belowdisplayskip}
\newcommand{\ignore}[1]{}
\renewcommand{\Pr}{\mathop{\bf Pr\/}}                    
\newcommand{\Ex}{\mathop{\bf E\/}}
\newcommand{\littlesum}{\mathop{{\textstyle \sum}}}
\newcommand{\poly}{\mathrm{poly}}
\newcommand{\R}{\mathbb R}
\newcommand{\Z}{\mathbb Z}
\newcommand{\eps}{\epsilon}
\newcommand{\zo}{\{0,1\}}
\newcommand{\co}{\colon}
\newcommand{\ul}[1]{\underline{#1}}
\newcommand{\calB}{{\mathcal{B}}}
\newcommand{\calC}{{\mathcal{C}}}
\newcommand{\calP}{\mathcal{P}}
\newcommand{\calR}{\mathcal{R}}
\newcommand{\calS}{\mathcal{S}}
\newcommand{\bone}[1]{{\boldsymbol{1}[#1]}}
\newcommand{\bV}{{\boldsymbol{V}}}
\newcommand{\itm}{solution\xspace}
\newcommand{\itms}{solutions\xspace}
\newcommand{\anitem}{a solution\xspace}
\newcommand{\allitems}{\calS}
\newcommand{\subitems}{\calR}
\newcommand{\candidates}{\calC}
\newcommand{\weight}{value\xspace}
\newcommand{\weights}{values\xspace}
\newcommand{\Weights}{Values\xspace}
\newcommand{\Wts}{V}
\newcommand{\bWts}{\bV}
\newcommand{\objs}{objectives\xspace}
\newcommand{\objvec}[2]{\mathrm{Obj}^{#2}(#1)}
\newcommand{\PO}{\mathbf{PO}}
\newcommand{\OK}{\mathbf{OK}}
\newcommand{\ok}{OK\xspace}
\newcommand{\T}{\mathbf{T}}
\newcommand{\Trans}{\mathtt{Witness}}
\newcommand{\transcript}{testimony\xspace}
\newcommand{\transcripts}{testimonies\xspace}
\newcommand{\Transcripts}{Testimonies\xspace}
\newcommand{\Transcript}{Testimony\xspace}
\newcommand{\Recon}{\ul{\mathtt{Reconstruct}}}
\newcommand{\Boxx}{Box\xspace}
\newcommand{\boxx}{box\xspace}
\newcommand{\Boxes}{Boxes\xspace}
\newcommand{\boxes}{boxes\xspace}
\newcommand{\tbox}[1]{$#1$-box\xspace}
\newcommand{\tboxes}[1]{$#1$-boxes\xspace}
\newcommand{\BoxL}{\calB}
\newcommand{\Bx}{B}
\renewcommand{\dim}{\mathrm{dim}}
\newcommand{\dimension}{dimension\xspace}
\newcommand{\pdfbd}{\upphi}
\newcommand{\J}{J}
\newcommand{\A}{A}
\newcommand{\free}[1]{M(#1)}
\newcommand{\Masking}{Masking\xspace}
\newcommand{\masking}{masking\xspace}
\newcommand{\WJ}{\free{\J} \circ \Wts}
\newcommand{\bWJ}{\free{\J} \circ \bWts}
\newcommand{\bWJbar}{\overline{\free{\J}} \circ \bWts}
\newcommand{\WJbar}{\overline{\free{\J}} \circ \Wts}
\newcommand{\Y}{Y}
\begin{document}

\title{Pareto Optimal Solutions for Smoothed Analysts}
\author{Ankur Moitra\thanks{Massachusetts Institute of Technology, \texttt{moitra@mit.edu}. Part of this research done while at Microsoft Research New England. Supported in part by a Fannie and John Hertz Foundation Fellowship.} \and Ryan O'Donnell\thanks{Carnegie Mellon University, \texttt{odonnell@cs.cmu.edu}. Part of this research done while visiting Microsoft Research New England; part done while at the Institute for Advanced Study. Supported by NSF grants CCF-0747250 and CCF-0915893, BSF grant 2008477, and Sloan and Okawa fellowships.}}

\maketitle

\begin{abstract}
Consider an optimization problem with $n$ binary variables and $d+1$ linear objective functions.  Each valid solution $x \in \{0,1\}^n$ gives rise to an objective vector in $\R^{d+1}$, and one often wants to enumerate the Pareto optima among them.  In the worst case there may be exponentially many Pareto optima; however, it was recently shown that in (a generalization of) the smoothed analysis framework, the expected number is polynomial in~$n$.  Unfortunately, the bound obtained had a rather bad dependence on $d$; roughly $n^{d^d}$.  In this paper we show a significantly improved bound of~$n^{2d}$.  

Our proof is based on analyzing two algorithms.  The first algorithm, on input a Pareto optimal $x$, outputs a ``testimony'' containing clues about $x$'s objective vector, $x$'s coordinates, and the region of space $B$ in which $x$'s objective vector lies.  The second algorithm can be regarded as a {\em speculative} execution of the first --- it can uniquely reconstruct $x$ from the testimony's clues and just \emph{some} of the probability space's outcomes.  The remainder of the
probability space's outcomes are just enough to bound the probability that $x$'s objective vector falls into the region~$B$.
\end{abstract}

\setcounter{page}{0} \thispagestyle{empty}
\newpage

\section{Introduction}

We study the expected number of Pareto optimal solutions in multiobjective binary optimization problems within the framework of \emph{smoothed analysis}.

\subsection{Multiobjective optimization and Pareto optima}

In a typical decision-making problem there are multiple criteria used in judging the quality of a solution.  For example, in choosing a driving route between two points one might want to minimize distance, tolls, number of turns, and expected traffic; in choosing a vacation hotel one might want to minimize price and distance to the beach, while maximizing quality.  In such cases there is rarely a single solution which is best on all criteria simultaneously.  The most popular way to handle the tradeoff is to determine the set of all \emph{Pareto optimal} solutions, meaning those solutions which are not dominated in all measures of quality by some other solution. This idea, originating in microeconomics, has been very extensively studied in computer science, especially in operations research~\cite{Ehr05}, algorithmic theory~\cite{PY02}, artificial intelligence~\cite{Deb01}, and database theory (under the name ``skyline queries'')~\cite{BKS01}.

Even if one is not interested in Pareto optima per se, many algorithms and heuristics for solving optimization problems enumerate Pareto optimal solutions as an intermediary step. For example, the Nemhauser--Ullmann algorithm~\cite{NU69} for exactly solving the Knapsack problem works by iteratively computing the Pareto optimal $\langle \text{value},\text{weight} \rangle$ pairs achievable by the first $i$ items, for $i = 1 \dots n$. Beier and V\"ocking~\cite{BV04b} observed that this algorithm runs in time $O(nB)$, where $B$ is an upper bound on the number of Pareto optima in each stage. As we describe below, this allowed them to give the first polynomial-time algorithm for an NP-hard optimization problem in the smoothed analysis framework, by performing smoothed analysis on the number of Pareto optimal solutions.

Unfortunately, even in the simplest case multiobjective optimization --- two linear objective functions --- the number of Pareto optimal solutions may be exponentially large in the number of decision variables.  There have been two main approaches to dealing with this exponential complexity.  The first, popularized by Papadimitriou and Yannakakis~\cite{PY02}, involves computing ``$\eps$-approximate Pareto sets''.  In many cases, polynomial-size $\eps$-approximate Pareto sets can be computed efficiently; see the thesis of Diakonikolas~\cite{Dia10} for references. The second approach, pioneered by Beier and V\"ocking~\cite{BV04b}, involves studying multiobjective optimization in the smoothed analysis framework.

\subsection{Smoothed analysis for discrete optimization}
Smoothed analysis was introduced in an influential work of Spielman and Teng~\cite{ST04}, as a framework intermediate between worst-case and average-case analysis.  Here the idea is to think of real numbers in the input as being defined based on imprecise measurements; specifically, they are first fixed adversarially in $[-1,1]$, say, and then subjected to Gaussian noise with some small standard deviation~$\sigma$.  In this framework, Spielman and Teng showed that a certain version of the simplex algorithm for linear programming runs in $\poly(n, 1/\sigma)$ expected time.

A notable work of Beier and V\"ocking~\cite{BV04b} from 2003 showed that the NP-hard $0/1$-Knapsack problem can be solved in polynomial time in the smoothed analysis framework.  (Previously, there had been a long line of work on average-case analysis of $0$/$1$-Knapsack: see, e.g.,~\cite{DF89,GM84b,Lue98}.)  Furthermore, they showed this holds even in a much more general model of smoothed analysis.  In one version of their model, each item's profit $P_i$ and weight $W_i$ is an independent random variable with values in $[0,1]$; the only restriction is that the probability density function (pdf) of each $P_i$ and $W_i$ is upper-bounded by the parameter~$\pdfbd$.  We call this model ``$\pdfbd$-semirandom''.  It is easy to see that as $\pdfbd$ is increased, the framework goes from (a very general version of) average-case analysis to worst-case analysis.  For example, given a small number $\sigma$, if we take $\pdfbd = 1/\sigma$ then the profits $P_i$ could be of the form $p_i + U_i$, where $p_i \in [\sigma, 1-\sigma]$ is an adversarially chosen number and $U_i$ is uniformly random on $[-\sigma, \sigma]$. (The original case of Gaussian noise does not quite fit in this framework, but is easily handled with a small amount of additional work.)

\subsection{Previous work}
Beier and V\"ocking showed that in this $\pdfbd$-semirandom model, the expected number of Pareto optimal knapsacks is $O(\pdfbd n^4)$; from this they immediately deduced that the Nemhauser--Ullmann algorithm runs in expected $O(\pdfbd n^5)$ time.  In fact, Beier and V\"ocking showed that the same is true even if the weights are \emph{adversarially} specified, and only the profits are chosen randomly (independently, from $\pdfbd$-bounded distributions).  In this case of adversarially weights, they also showed an $\Omega(n^2)$ lower bound for the expected number of Pareto optima, even for uniformly distributed profits (i.e., $\pdfbd = 1$).

In his thesis, Beier~\cite{Bei04} extended this analysis to general $2$-objective binary optimization problems.  Specifically, he showed that given an arbitrary set of ``solutions'' $\allitems \subseteq \{0,1\}^n$ and arbitrary $2$nd objective values $\objvec{x}{2}$ for each $x \in \allitems$, if the $1$st objective is linear and $\pdfbd$-semirandom, then the expected number of Pareto optima is still $O(\pdfbd n^4)$.  Later work of Beier, R\"oglin, and V\"ocking~\cite{BRV07} improved this bound to $O(\pdfbd n^2)$ (which is tight for constant $\pdfbd$) and also extended to the case of integer-valued decision variables.

These works only handled the case of $2$ objectives.  Recently, R\"oglin and Teng~\cite{RT09} extended the analysis to the case in which there are $d+1$ objective functions, $d$ of which are linear and $\pdfbd$-semirandom, and one of which is completely arbitrary.  Their bound on the expected number of Pareto optima is polynomial in $n$ and $\pdfbd$ for constant~$d$, and they were also able to polynomially bound higher moments. Unfortunately, their result is probably of theoretical interest only, as the dependence on $d$ is rather bad.  E.g., for $d = 3$ their upper bound on the expected number of Pareto optima is roughly $n^{97}$ assuming $n \geq 2^{453787938}$ (and is much worse than $n^{97}$ for smaller $n$).  In general their bound is roughly $(\sqrt{\pdfbd} \cdot n)^{f(d)}$ for $f(d) = 2^{d-1} (d+1)!$, once $n \geq \exp(\exp(d^2 \log d))$.  R\"oglin and Teng concluded their work by asking whether the exponent $f(d)$ on $n$ could be reduced from $d^{\Theta(d)}$ to $\poly(d)$; this was later recognized as an important open problem~\cite{Ten01}. Here, we resolve this question.

Very closely related to the research we have just described is a sequence of works~\cite{BV06,ANRV07,RV07,RT09}, starting with Beier and V\"ocking and culminating with R\"oglin and Teng, showing that binary optimization problems are solvable in expected polynomial time in the smoothed analysis framework if and only if they are solvable in randomized pseudopolynomial time in the worst case.

\subsection{Our contribution}

In this work we give an affirmative answer to the open problem of R\"oglin and Teng, reducing their bound from roughly $n^{2^{d-1} (d+1)!}$ to $n^{2d}$.  Thus the exponent on $n$ can in fact be made \emph{linear} in $d$. In particular, we prove that the expected number of Pareto optimal \itms in the model described above is at most $2 \cdot (4\pdfbd d)^{d(d+1)/2} \cdot n^{2d}$. It is interesting to compare our result with what is known about Pareto optima when $2^n$ points are chosen independently and uniformly in $[-1,1]^{d+1}$. In this scenario, old results~\cite{BKST78,Dev80,Buc89} show that the expected number of Pareto optima is $\Theta(n)^d$ for each constant~$d$.  Our bound is within a square of this quantity, despite the significant dependencies in the model.  We also note that this square is necessary at least for $d = 1$, due to the $\Omega(n^2)$ lower bound of Beier and V\"ocking~\cite{BV04b}.

Usually, in smoothed analysis we are interested in demonstrating that a certain algorithm runs quickly or that a certain approximation algorithm returns a near-optimal solution. In such cases, one often defines an event -- some property of the data that ensures an algorithm runs quickly or an approximation algorithm works well. This is true in the context of previous literature on the number of Pareto optimal \itms as well --- indeed, the works of \cite{BV06,RT09} are based on notions of \emph{winner gap} and \emph{loser gap} which can be interpreted as a discrete analogue to condition number. 

Our approach turns this around: We give a deterministic algorithm, which on input a Pareto optimal $x$, runs on the data and produces an event -- in the form of a ``testimony'' containing clues about $x$'s objective vector, $x$'s coordinates, and the region of space $B$ in which $x$'s objective vector lies. Our family of events is rather complicated, but is defined implicitly based on a simple algorithm. 

We then give a second algorithm which can be regarded as a {\em speculative} execution of the first --- it can uniquely reconstruct $x$ from the testimony's clues and just \emph{some} of the probability space's outcomes.  The remainder of the
probability space's outcomes are just enough to bound the probability that $x$'s objective vector falls into the region~$B$. So we are able to bound the probability that any particular "testimony" is output by the first algorithm, and consequently we are able to give an upper bound on the expected number of Pareto optimal \itms.

\section{Our result and approach}

In this section we will describe the problem formally, state our Main Theorem, and then briefly describe our approach.  The remainder of the paper is devoted to the proof of the Main Theorem.

\subsection{Problem definitions}

Our setting captures the broad class of multiobjective binary optimization problems in which all (but one) of the objective functions are linear. We fix once and for all an arbitrary set $\allitems \subseteq \zo^n$ of \emph{\itms{}}.  These might encode knapsacks, the sets of edges forming a spanning tree in a given graph, or even the sets of edges forming a Hamiltonian cycle.

\paragraph{Matrix notation.} We think of \itms in $\allitems \subseteq \zo^n$ as column vectors.  For a matrix (or vector) $A$, we will write $A^i$ for the $i$'th row of $A$ and write $A_j$ for the $j$'th column of $A$; thus $A^i_j$ is the $(i,j)$ entry of $A$.  For $i < k$ we will also write $A^{i..k}$ for the submatrix of $A$ consisting of rows $i$ through $k$.  Given matrices $A$ and $B$ of the same size we write $A \circ B$ for their Hadamard product, i.e., their entry-wise product.  Thus $(A \circ B)^i_j = A^i_j B^i_j$.

\paragraph{\Weights and \objs.}   Associated to each \itm $x \in \allitems$ are $d+1$ \emph{\objs{}}; we encode them with a column vector $\objvec{x}{} \in \R^{d+1}$. The first $d$ \objs are assumed to be linear and are chosen in a ``$\pdfbd$-semirandom'' fashion.  More specifically, there is a $d \times n$ matrix $\bWts$ of random variables called \emph{\weights{}}. (We typically write random variables in boldface.)  We assume that each entry of $\bWts$ is an independent, continuous random variable with support on $[-1,1]$ and pdf bounded by the parameter~$\pdfbd$.  The first $d$ \objs of \itm  $x$ are defined by the equation $\objvec{x}{1..d} = \bWts x$.  (Recall that $x \in \zo^n$ is thought of as a column vector.)
The $(d+1)$'th \objs of the \itms are neither linear nor random.  We assume merely that they are fixed, distinct real numbers, chosen in advance of $\bWts$.  (Indeed, their magnitudes are not important for us, only their relative ordering.) We will significantly abuse notation by writing $\bWts^{d+1} x$ in place of $\objvec{x}{d+1}$. In this way, $\objvec{x}{i} = \bWts^i x$ holds for each $i \in [d+1]$.

\paragraph{Pareto optima.}  Without loss of generality, we think of higher \objs as preferable.  Accordingly, given (column) vectors $p, q \in \R^{d+1}$ we say that $p$ \emph{dominates}~$q$ if $p \geq q$.  Here ``$\geq$'' is to be interpreted entry-wise when applied to vectors; i.e., $p$ dominates~$q$ if $p^i \geq q^i$ for all $i \in [d+1]$.  We will also sometimes use the notion of \emph{$t$-domination} for $t < d+1$; we say that $p$ $t$-dominates $q$ if $p^{1..t} \geq q^{1..t}$.  Given a set of points $\calP \subset \R^{d+1}$ we say that $p \in \calP$ is \emph{Pareto optimal (within $\calP$)} if $p$ is not dominated by any other point $q \in \calP$; i.e., for all $q \in \calP \setminus \{p\}$, there exists $i \in [d+1]$ with $p^i > q^i$.  Of course, we will be interested in applying this concept to the \objs of the \itms in $\allitems$.  Given $\bWts$, we consider $\calP = \{\objvec{z}{} : z \in \allitems\} \subset [-n,n]^d \times \R$.  We then say that the \emph{\itm{}} $x \in \allitems$ is Pareto optimal if $\objvec{x}{}$ is Pareto optimal within $\calP$.  Finally, given $\bWts$, we define $\PO \subseteq \allitems$ to be the set of all Pareto optimal \itms.

\subsection{Our result} \label{sec:maintheorem}
We can now state our Main Theorem:

\begin{named}{Main Theorem} \qquad $\displaystyle \Ex_{\bWts}\Bigl[\bigl|\PO\bigr|\Bigr] \leq 2 \cdot (4\pdfbd d)^{d(d+1)/2} \cdot n^{2d}.$
\end{named}

\subsection{Our approach} \label{sec:boxdef}
To prove the Main Theorem we use a probabilistic argument which has a rather unusual form.  Unfortunately, it is also fairly intricate. In this section we will try to convey some of the ideas of the argument while hiding a number of complicating details.

Our proof can be seen as a $d$-dimensional generalization of the Beier--R\"oglin--V\"ocking $O(\pdfbd n^2)$ upper bound for the $d = 1$ case (which we will later sketch).  The reader is advised to keep the cases $d = 1,2$ in mind for visualization purposes.  Recall that the \itms $x \in \allitems$ have $d$ semirandom linear \objs but their $(d+1)$'th \objs are fixed in advance arbitrarily.  Once the \weights $\bWts$ are drawn and the \objs $\objvec{x}{1..d} \in [-n,n]^d$ thus determined, one can think of identifying the Pareto optima among $\allitems$ via a ``sweep'' along the $(d+1)$'th dimension.  This means proceeding through the \itms $x \in \allitems$ in decreasing order of $\objvec{x}{d+1}$ and considering the ``point'' $\objvec{x}{1..d} = \bWts x \in [-n,n]^d$; the set of points which are not $d$-dominated by any previously seen point correspond exactly to the set of Pareto optimal \itms.

\paragraph{\Boxes and density.} An oversimplification of our proof is to think of it as showing that the ``probability density'' of Pareto optimal points in $[-n,n]^d$ is not too high; roughly $O(n^d)$.  In aid of making this formal, we fix once and for all a real number $\eps > 0$ which should be thought of as extremely small, $\eps \ll 1/(\pdfbd d 2^{2n})$.  Additionally, we assume that $1/\eps$ is an integer.  We then introduce the following definition:
\begin{definition} For a point $b \in (\eps \Z)^d$, we define the \emph{\tbox{d} based at point~$b$} to be $b + [0,\eps)^d$.  Note that the set of all \tboxes{d} partitions $[-n,n)^d$ and indeed all of $\R^d$.  More generally, for $t \in [d]$ and $b \in (\eps \Z)^j$, we define the \emph{\tbox{t} based at point~$b$} to be $(b + [0,\eps)^d) \times \R^{d-t}$.  The set of all \tboxes{t} also partitions $\R^d$.
\end{definition}
Since $\eps$ is so small, the probability that there will be two different points $\bWts x$ and $\bWts x'$ in the same \tbox{d} is negligible.  Thus if $B$ denotes an arbitrary \tbox{d}, we can upper-bound the number of Pareto optima by $(2n/\eps)^d$ times the probability that there is a Pareto optimum $x \in \allitems$ with $\objvec{x}{1..d}$ in $B$.  Our goal is to bound this probability by roughly $O(n^d) \eps^d$.

In particular, we must make sure to keep the probability roughly comparable to $\eps^d$.  A crucial aspect of our proof is that we can bound $\Pr[\bWts x \in B]$ by $(\pdfbd \eps)^d$ for any $x \neq \vec{0}$ \emph{while only using a small part of the probability space}.  Specifically, suppose we select $j \in [n]$ such that $x^j \neq 0$, and then imagine drawing all entries of $\bWts$ except for the $j$'th column $\bWts_j$.  Then the final position of the point $\bWts x$ is of the form $(p^1 + \bWts^1_j, \dots, p^n + \bWts^n_j)$, where the $p^i$'s are constants.  This point will lie in the \boxx $B$ only if each \weight $\bWts^i_j$ falls into a certain fixed interval of width $\eps$.  As the random variables $\bWts^i_j$ are independent and have pdf's bounded by $\pdfbd$, the probability that all $\bWts^i_j$'s fall into the required intervals is at most $(\pdfbd \eps)^d$.  Note that this argument works for any possible outcome of the $d(n-1)$ \weights outside of $\bWts_j$.

\paragraph{Uniqueness.} Unfortunately we cannot simply take this observation and union-bound over all potential Pareto optima $x$, since this would lose a factor of $|\allitems|$.  We would be in much better shape if, after all \weights except for $\bWts_j$ were drawn, there were very few \itms $x$ --- or even just a unique \itm $x$ --- for which the event
\[
\T_x = \text{``$x$ is Pareto optimal with $\bWts x \in B$''}
\]
had a chance of occurring.  Here by ``have a chance of occurring'', we mean $\Pr_{\bWts_j}[\T_x] > 0$.   In the simplest case of $d = 1$, Beier, R\"oglin, and V\"ocking~\cite{BRV07} essentially show that essentially holds if one adds some extra conditions to the event $\T_x$.  We now sketch a reinterpretation of their argument.

\paragraph{The Beier--R\"oglin--V\"ocking argument.} Note that since $d = 1$ for this sketch, the \weights matrix $\bWts$ is just a random (row) vector. For each $j \in [n]$ and \tbox{1} (interval) $B$, let us define the significantly more complicated event
\[
\T_{x, j, B} = \text{``$x^j = 1$, $\bWts x \in B$, $x$ is Pareto optimal, and the `next' Pareto optimum $y$ has $y^j = 0$''}.
\]
Here `next' refers to the ``sweep along the $2$nd coordinate''; i.e., $y$ is the \itm $z$ with maximal $\objvec{z}{2}$ among $\{z \in \allitems : \bWts z > \bWts x\}$.  The Beier--R\"oglin--V\"ocking argument takes a union bound over all $j \in [n]$ in addition to over all $B$.  The key to their argument is the following ``uniqueness'' claim:  for any draw of the \weights other than $\bWts_j$, there is a \emph{unique} $x \in \allitems$ for which the event $\T_{x,j,B}$ has a chance of occurring.  Given this claim, the proof is almost complete.  For that unique $x$ the event $\T_{x,j,B}$ still has at most a $\pdfbd \eps$ chance of occurring, since $x^j$ must be $1$ and the \weight $\bWts_j$ is still independent and undrawn.  Union-bounding over all $j$ and $B$, one concludes that the expected value of 
\[
\#\{\text{Pareto optimal $x$} : \text{the `next' Pareto optimum $y$ has $y^j \neq 1 = x^j$ for some $j$}\}
\]
is at most $n \cdot (2n/\eps) \cdot (\pdfbd \eps) = 2\pdfbd n^2$.  This \emph{almost} counts the total number of Pareto optima. Certainly for each Pareto optimum $x$, there is \emph{some} coordinate $j$ such that the `next' Pareto optimum $y$ has $y^j \neq a = x^j$; it's just that this bit $a$ might be $0$ rather than $1$.  The Beier--R\"oglin--V\"ocking is concluded (essentially) by union-bounding over $a \in \{0,1\}$ as well.  (It may seem crucial that $x^j$ was $1$ and not $0$ when we observed that $\Pr_{\bWts_j}[\bWts x \in B] \leq \pdfbd \eps$.  This difficulty is overcome with an additional trick, changing the condition $\bWts x \in B$ in $\T_{x,j,B}$ to the condition $\bWts x - \bWts_j \overline{a}$ in $\T_{x,j,a,B}$.)

\paragraph{The R\"oglin--Teng argument.}  How can we generalize this argument to $d$ dimensions?  R\"oglin and Teng roughly take the following approach.  First, they generalize the above argument to show that for $d = 1$, the expected $c$'th power of the number of Pareto optima is $(\pdfbd n^2)^{c(1+o(1))}$.  This gives them a concentration result for the number of Pareto optima.  They then proceed by induction on the dimension $d$.  In reducing from dimension $d$ to $d-1$ there are two difficulties.  First, instead of having a unique $x$ to worry about as in the Beier--R\"oglin--V\"ocking,  they need to worry about all \itms in a $(d-1)$-dimensional Pareto set.  One expects this not to be too large, by induction; however, their argument needs a high-probability result.  Hence they need to inductively bound higher powers of the number of Pareto optima.  This induction leads to the rather bad dependence on $d$.  A second difficulty they face comes from their use of conditioning to separate the $d$'th dimension from the first $d-1$; this introduces dependencies that they must work to control.

\paragraph{Our argument.} We define a family of events $\T_{x,\J,\A,\BoxL}$.  These events are again of the form ``$x$ falls into a \boxx related to $\BoxL$ and certain other lower-dimensional conditions happen''.  We need to define these other conditions in an extremely careful way so that the following holds:
\begin{center}
\fbox{\parbox{5.0in}{
  \emph{Based on $\J$, there is a way to partition the draw of $\bWts$ into two parts called $\bWJbar$ and $\bWJ$.  Given the outcome of $\bWJbar$, there is a {\bf unique} $x \in \allitems$ for which $\T_{x,\J,\A,\BoxL}$ can occur.  Furthermore, the randomness remaining in $\bWJ$ is such that the probability of $\T_{x,\J,\A,\BoxL}$ can {\bf still be bounded} by an appropriately small quantity.}}}
\end{center}

We manage to identify the necessary conditions; however they are complicated enough that they cannot be described with just a sentence of text.  Instead, we come to the first unusual aspect of our argument; the extra conditions are of the form ``\emph{a certain deterministic algorithm $\Trans$, when run with input $x$ and $\bWts$, produces a certain output \transcript{}}''.  At this point the reader might think that such conditions have no chance to satisfy the boxed property above: in particular, since $\Trans$ depends on $\bWts$, how can knowing its output still leave the $\bWJ$ part of the probability space free?  We overcome this problem with a second unusual idea.  We introduce \emph{another} deterministic algorithm called $\Recon$, which takes as input the output $\Trans(x, \bWts)$, along with the outcome of $\bWJbar$.  We show that using just this information, $\Recon$ can recover the input $x$, \emph{assuming that it is Pareto optimal}.  This ability to reverse-engineer $x$ gives us the needed ``uniqueness'' property, and the fact that $\Recon$ does not need to know $\bWJ$ -- and yet this amount of remaining randomness is still enough to bound the probability that $x$ falls into certain \boxes.

\section{Outline of the proof} \label{sec:outline}

At this point we move from intuition to precise details.  In this section we give the overall structure of our proof of the Main Theorem.  By the end of this outline we will have reduced it to a number of lemmas, which are then proven in the appendices of the paper.

\subsection{\Transcripts}
The first key ingredient in our proof is a deterministic map we call $\Trans$, which takes as input \anitem $x \in \allitems$ and a fixed matrix of \weights $\Wts$, and outputs a ``\transcript'' $(\J, \A, \BoxL)$:
\[
\Trans \co (x,\Wts) \mapsto (\J,\A,\BoxL).
\]
(The map $\Trans$ also depends on the fixed quantities $n$, $\eps$, $\allitems$, and the $(d+1)$'th \objs $\objvec{z}{d+1}$.) We will actually only care about the behavior of $\Trans(x,\Wts)$ when the \weights $\Wts$ make $x$ into a Pareto optimum, but it is clearest to define the mapping for every pair of $x$ and $\Wts$.

Regarding the \transcript itself, roughly speaking $\J$ is a list of $d$ coordinates in $[n]$, $\A$ is a ``diagonalization matrix'' consisting of $d$ bits per coordinate in $\J$, and $\BoxL$ is a list of \tboxes{t}, one for each $t \in [d]$.  \emph{Very} roughly speaking, the meaning of $\Trans(x,\Wts) = (\J,\A,\BoxL)$ is that the bits $\{x^j : j \in \J\}$ agree with certain bits in $\A$ and that $\Wts x$ falls into the \boxes in $\BoxL$ --- or rather, that a slight translation of $\Wts x$ based on $\A$ falls into these \boxes.  Precise details are given in Section~\ref{sec:trans}, but they are not important for understanding the outline of the proof.  On first reading, one should think of the number of possible \transcripts as something roughly like $n^{2d} / \eps^{d(d+1)/2}$.

\subsection{The \ok event}
We will also need to define a simple event based on the random draw of $\bWts$ which we call $\OK$. In studying Pareto optima we prefer not to distinguish between domination and ``strict'' domination.  Luckily we don't have to: since the entries of $\bWts$ are continuous random variables, the probability that $\bWts^i x = \bWts^i y$ for any $i \in [d]$ and distinct $x, y \in \allitems$ is~$0$.  Our event $\OK$, which we now formally define, slightly generalizes this:
\begin{definition}
$\OK = \OK(\bWts)$ is defined to be the event that $|\bWts^i x - \bWts^i y| > \eps$ for all $i \in [d]$ and distinct $x,y \in \allitems$.
\end{definition}
We require the following simple lemma:
\begin{named}{\ok Lemma}  $\Pr[\neg \OK] \leq \pdfbd d 2^{2n+1} \eps$.
\end{named}
\begin{proof}  For each fixed $i \in [d]$ and distinct $x, y \in \zo^n$, we show that $\Pr[|\bWts^i x - \bWts^i y| \leq \eps] \leq 2 \pdfbd \eps$; the result then follows by a union bound. Since $x$ and $y$ are distinct we may select $j \in [n]$ such that $x^j - y^j = 1$, after possibly exchanging $x$ and $y$.  Now imagine that the \weights $\{ \bWts^i_k : k \neq j\}$ are drawn first; then the event $|\bWts^i x - \bWts^i y| \leq \eps$ becomes of the form $|c + \bWts^i_j| \leq \eps$ for some constant~$c$.  By independence, the random variable $\bWts^i_j$ still has pdf bounded by $\pdfbd$, so this event has probability at most $\pdfbd \cdot 2\eps$, as desired.
\end{proof}

\subsection{Proof of the Main Theorem}
We are now able to outline the proof of the Main Theorem.

\begin{equation} \label{eqn:ok-split-up}
\Ex_{\bWts}\Bigl[\bigl|\PO\bigr|\Bigr] = \Ex_{\bWts}\Bigl[\bigl|\PO\bigr| \cdot \bone{\OK}\Bigr] + \Ex_{\bWts}\Bigl[\bigl|\PO\bigr| \cdot \bone{\neg \OK}\Bigr].
\end{equation}
Regarding the second term in~\eqref{eqn:ok-split-up}, naively we have
\begin{equation} \label{eqn:not-ok-term}
\Ex_{\bWts}\Bigl[\bigl|\PO\bigr| \cdot \bone{\neg \OK}\Bigr] \leq \Ex_{\bWts}\Bigl[2^n \cdot \bone{\neg \OK}\Bigr] = 2^n \Pr[\neg \OK]\leq \pdfbd d 2^{3n+1} \eps,
\end{equation}
using the \ok Lemma.  As for the first (main) term in~\eqref{eqn:ok-split-up}, we break it up according to the possible \transcripts:
\begin{equation}  \label{eqn:trans-break-up}
\Ex_{\bWts}\Bigl[\bigl|\PO\bigr| \cdot \bone{\OK}\Bigr] = \sum_{(\J, \A, \BoxL)} \Ex_{\bWts}\left[\littlesum_{x \in \allitems} \bone{x \in \PO} \cdot \bone{\Trans(x,\bWts) = (\J,\A,\BoxL)} \cdot \bone{\OK}\right].
\end{equation}

For a given draw of \weights $\bWts$, it is possible to show that \emph{if} the event $\OK$ occurs, then the different $x \in \PO$ generate unique \transcripts $(\J, \A, \BoxL)$.  (This follows from the \Transcript--Determines--PO Lemma in Section~\ref{sec:trans}.) In other words, for a fixed \transcript $(\J,\A,\BoxL)$, after $\bWts$ is drawn there can be at most one $x \in \allitems$ for which the event
\[
\bigl(x \in \PO\bigr) \wedge \bigl(\Trans(x,\bWts) = (\J,\A,\BoxL)\bigr) \wedge \OK
\]
occurred.  This shows that~\eqref{eqn:trans-break-up} is at most the number of possible \transcripts.  Unfortunately, that is not a helpful bound because the number of possible \transcripts includes the huge factor $(1/\eps)^{d(d+1)/2}$.

We now come to the key idea in the proof.  For each fixed \transcript $(\J,\A,\BoxL)$, we split up the draw of $\bWts$ into two stages in a way that depends on~$J$.  In the first stage, ``most'' of the $dn$ entries of $\bWts$ are drawn; we denote these entries by $\bWJbar$ for reasons to be explained later.  In the second stage, the remaining ``few'' entries of $\bWts$ are drawn (independently, of course); we denote this second set of entries by $\bWJ$.  On first reading, one should think of ``few'' as meaning $d(d+1)/2$.  Now the key idea is that the uniqueness property described above holds \emph{even after just drawing $\bWJbar$}:
\begin{named}{Uniqueness Lemma}
Fix a \transcript $(\J,\A,\BoxL)$ and fix the outcome of $\bWJbar$.  Then there exists at most one $x \in \allitems$ such that the event
\[
\bigl(x \in \PO\bigr) \wedge \bigl(\Trans(x,\bWts) = (\J,\A,\BoxL)\bigr) \wedge \OK
\]
can occur.  Here the event's randomness is just the draw of $\bWJ$.
\end{named}
Based on this idea, we write~\eqref{eqn:trans-break-up} as
\[
\sum_{(\J, \A, \BoxL)} \Ex_{\bWJbar}\left[\littlesum_{x \in \allitems}\ \  \Pr_{\bWJ}\left[\bigl(x \in \PO\bigr) \wedge \bigl(\Trans(x,\bWts) = (\J,\A,\BoxL)\bigr) \wedge \OK\right]\right].
\]
The Uniqueness Lemma says that for each choice of $(\J, \A, \BoxL)$ and $\bWJbar$, at most one of the probabilities in the above expression can be nonzero.  Hence we may upper-bound~\eqref{eqn:trans-break-up} by
\begin{equation} \label{eqn:used-uniqueness}
\sum_{(\J, \A, \BoxL)} \Ex_{\bWJbar}\left[\max_{x \in \allitems}\ \  \Pr_{\bWJ}\left[\bigl(x \in \PO\bigr) \wedge \bigl(\Trans(x,\bWts) = (\J,\A,\BoxL)\bigr) \wedge \OK\right]\right].
\end{equation}

We now complete the proof by showing that there is enough randomness left in $\bWJ$ so that for any $x \in \allitems$, even the probability of the event $\Trans(x,\bWts) = (\J, \A, \BoxL)$ is small.  We bound this probability in terms of a parameter called $\dim(\BoxL)$, which we define in Section~\ref{sec:trans}  For now, it suffices to know that $\dim(\BoxL)$ is an integer between $0$ and $d(d+1)/2$; on first reading, one should think of it as simply always being $d(d+1)/2$.
\begin{named}{Boundedness Lemma} For every fixed $(\J,\A,\BoxL)$, outcome of $\bWJbar$, and $x \in \allitems$, it holds that
\[
\Pr_{\bWJ}[\Trans(x,\bWts) = (\J,\A,\BoxL)] \leq \pdfbd^{\dim(\BoxL)} \eps^{\dim(\BoxL)}.
\]
\end{named}
\noindent Using this in~\eqref{eqn:used-uniqueness} we upper-bound~\eqref{eqn:trans-break-up} by
\begin{equation} \label{eqn:used-boundedness}
\sum_{(\J, \A, \BoxL)} \pdfbd^{\dim(\BoxL)} \eps^{\dim(\BoxL)}.
\end{equation}

As mentioned, on first reading one should think of $\dim(\BoxL)$ as always being $d(d+1)/2$ and one should think of the number of possible \transcripts as being roughly $n^{2d}/\eps^{d(d+1)/2}$.  Thus~\eqref{eqn:used-boundedness} is roughly $\pdfbd^{d(d+1)/2} \cdot n^{2d}$, comparable to the quantity in the Main Theorem.  We will eventually do a more precise but straightforward estimation to bound~\eqref{eqn:used-boundedness} (and hence~\eqref{eqn:trans-break-up}):
\begin{named}{Counting Lemma}  For a fixed $n$ and $\eps$,
\[
\sum_{\substack{\text{\emph{possible \transcripts{}}} \\ (\J, \A, \BoxL)}} \pdfbd^{\dim(\BoxL)} \eps^{\dim(\BoxL)} \leq
2 \cdot (4d\pdfbd)^{d(d+1)/2} \cdot n^{2d}.
\]
\end{named}
Substituting this bound on~\eqref{eqn:trans-break-up}, as well as the bound~\eqref{eqn:not-ok-term}, into~\eqref{eqn:ok-split-up} yields
\[
\Ex_{\bWts}\Bigl[\bigl|\PO\bigr|\Bigr]  \leq 2 \cdot (4d\pdfbd)^{d(d+1)/2} \cdot n^{2d} + \pdfbd d 2^{3n+1} \eps.
\]
Since we can make $\eps$ arbitrarily small, the proof of the Main Theorem is complete.

\section{\Transcripts} \label{sec:trans}

In this section we describe the $\Trans$ algorithm, which assumes $n$, $\eps$, $\allitems$, and the $(d+1)$'th \objs $\objvec{z}{d+1}$ are fixed. The input to $\Trans$ is \anitem $x \in \allitems$ and a $d \times n$ matrix of \weights $\Wts$.  The output is a ``\transcript'', which is a triple $(\J, \A, \BoxL)$.

\subsection{Components of a \transcript}
We now describe the components of a \transcript.

\paragraph{Index vector.} We call the first component, $\J$, the ``index vector''.  This is defined to be a length-$d$ row vector from $([n] \cup \{\bot\})^d$ in which all non-$\bot$ indices are distinct.  On first reading, one should ignore the possibility of $\bot$'s and simply think of an index vector $\J$ as an ordered list of $d$ distinct indices from $[n]$.

\paragraph{Diagonalization matrix.} We call the second component, $\A$, a ``diagonalization matrix''.  $\A$ is $n \times d$ matrix with entries from $\{0, 1, \bot\}$.  Most entries in $\A$ will be $0$; indeed, the row $\A^j$ will be nonzero only if $j$ is one of the indices in $\J$.  Before describing $\A$ completely formally, let us describe the ``typical'' case when $\J$ contains no $\bot$'s, and thus just consists of distinct indices from $[n]$.  In this case, $\A$ will also contain no $\bot$'s.  To make the picture even clearer, let us imagine that $\J$ is simply $(1, 2, \dots, d)$.  Thus $\A$ will only be nonzero in its first $d$ rows, so let us write $\A' = \A^{1..d}$. In this case, if $x \in \allitems$ is the input to $\Trans$, then $\A'$ will be of the following form:
\[
\left[ \begin{array}{cccccc}
            \overline{x^{1}} & \ast & \ast & \ast & \cdots & \ast \\
            x^{2} & \overline{x^{2}} & \ast & \ast & \cdots & \ast \\
            x^{3} & x^{3} & \overline{x^{3}} & \ast & \cdots & \ast \\
            x^{4} & x^{4} & x^{4} & \overline{x^{4}} & \cdots & \ast \\
             \vdots & \vdots & \vdots & \vdots & \ddots & \vdots \\
            x^{d} & x^{d} & x^{d} & x^{d} & \cdots & \overline{x^{d}}
       \end{array} \right].
\]
Here each $x^j$ is of course in $\zo$, we write $\overline{x^j}$ for $1 - x^j$, and $\ast$ denotes that the entry may be either $0$ or~$1$.  We say that \emph{$\A$ diagonalizes $x$ on $\J = (1, 2, \dots, d)$}.  We now give the formal definition which includes the possibility of $\J$ containing $\bot$'s.
\begin{definition}
  Given an index vector $\J$ and \anitem $x \in \zo^n$, we say that the matrix $\A \in \{0,1,\bot\}^{n \times d}$ is a \emph{diagonalization matrix}, and in particular that it \emph{diagonalizes $x$ on $\J$}, if the following conditions hold:  If $j \in [n]$ does not appear in $\J$, then row $\A^j$ is all zeros.  Otherwise, if $j = \J_u \in [n]$ for some $u \in [d]$:
  \begin{equation} \label{eqn:diag-conds}
    A^j_t = \bot \text{ if and only } \J_t = \bot, \qquad
    A^j_u = \overline{x^j}, \qquad
    A^j_t = x^j \text{ for all } t < u \text{ with $\J_t \neq \bot$.}
  \end{equation}
\end{definition}

\paragraph{\Boxx list.} The last component of a \transcript, $\BoxL$, is a list $\BoxL = (\Bx_1, \dots, \Bx_d)$.  For $t \in [d]$ we have that $\Bx_t = \bot$ if $\J_t = \bot$, and otherwise $\Bx_t$ is a \tbox{t}, as defined in Section~\ref{sec:boxdef}.  We define the \emph{\dimension} of the \boxx list $\BoxL$ to be $\sum \{t \in [d] : \Bx_t \neq \bot\}$.  On first reading, one should ignore the possibility of $\Bx_t = \bot$, in which case $\dim(\BoxL)$ is always $1 + 2 + \dots + d = d(d+1)/2$.

\paragraph{\Masking matrix.} Having defined the components $(\J,\A,\BoxL)$ of a \transcript, we now explain one more piece of notation; that of a \masking matrix.  Given an index vector $\J$, we define the associated \masking matrix $\free{\J} \in \{0,1\}^{d \times n}$ as follows:
\[
\free{\J}^i_j = \begin{cases}
                    1 &  \text{if $j = \J_t \in [n]$ for some $t \in [d]$ and $i \leq t$,}\\
                    0 & \text{otherwise.}
                \end{cases}
\]
For illustration, if $\J = (1, 2, \dots, d)$, then $\free{\J}$ is the mostly-zeros $d \times n$ matrix whose left-most $d \times d$ submatrix is
\[
\left[ \begin{array}{cccccc}
            1 & 1 & 1 & 1 & \dots & 1 \\
            0 & 1 & 1 & 1 & \dots & 1 \\
            0 & 0 & 1 & 1 & \dots & 1 \\
            0 & 0 & 0 & 1 & \dots & 1 \\
            \vdots & \vdots & \vdots & \vdots & \ddots & \vdots \\
            0 & 0 & 0 & 0 & \dots & 1
       \end{array} \right].
\]
Note that in the ``typical'' case that $\J$ contains no $\bot$'s, the number of $1$'s in $\free{\J}$ is exactly $d(d+1)/2$.  Given a \masking matrix, we write $\overline{\free{\J}}$ for its bitwise complement; i.e., $\overline{\free{\J}}^i_j = 1 - \free{\J}^i_j$.  We are now able to explain the notation used in the key step of the proof of the Main Theorem.  Given the semi-random matrix of \weights $\bWts$, note that for any $\J$,
\[
  \bWts = \bWJbar + \bWJ.
\]
Further, the random matrices $\bWJbar$ and $\bWJ$ are independent of one another.  This gives our crucial means of separating the random draw of $\bWts$ into two stages.

\subsection{The $\Trans$ mapping}

Here is the deterministic algorithm computing the $\Trans$ mapping:\\

\fbox{\parbox{6.0in}{
  $\Trans(x,\Wts) \co$
\texttt{
\begin{enumerate} \itemsep 0pt
  \item Set $\subitems_{d+1} = \allitems$.
  \item Initialize $\J$ to the length-$d$ column vector $(\bot, \bot, \dots, \bot)$.
  \item Initialize $\Y$ to the $n \times d$ matrix where every entry is $\bot$.
  \item For $t = d, d-1, d-2, \dots, 1$:
  \item \quad Let $\candidates_{t} = \{z \in \subitems_{t+1} : \Wts^{1..t} z > \Wts^{1..t} x\}$. \label{cands-def}
  \item \quad If $\candidates_t \neq \emptyset$,
  \item \quad \quad Set column $Y_t$ to be the $y \in \candidates_t$ for which $\Wts^{t+1} y$ is maximal.$^\dag$ \label{Y-def}
  \item \quad \quad Set $\J_t$ to be the least index in $[n]$ such that $\Y_t^{\J_t} \neq x^{\J_t}$.$^\ddag$
  \item \quad \quad Set $\subitems_t = \{z \in \subitems_{t+1} : \Wts^{t+1} z > \Wts^{t+1} \Y_t\ \text{and} \ z^{\J_t} = x^{\J_t}\}$. \label{subitems-def}
  \item \quad Else
  \item \quad \quad Set $\subitems_{t} = \subitems_{t+1}$.
  \item \quad End If
  \item End For
  \item Define the $n \times d$ matrix $\A$ by $\A^j_u = \begin{cases} Y^j_u & \text{if $j$ appears in $\J$,} \\ 0 & \text{otherwise.} \end{cases}$
  \item Define the \Boxx list $\BoxL = (\Bx_1, \dots, \Bx_d)$ as follows:
  \item[] For $u \in [d]$, if $\J_u = \bot$ then set $\Bx_u = \bot$.
  \item[] Otherwise, set $\Bx_u$ to be the \tbox{u} containing $\Wts x - (\free{\J} \circ \Wts) \A_u$.
  \item Output $(\J, \A, \BoxL)$.
\end{enumerate}}
}}

\medskip

\noindent {\footnotesize $^\dag$\ Two comments about this line:  Regarding maximality, say that we break ties by lexicographic order.

\noindent \phantom{$^\dag$\ }Regarding the case $t = d$, recall our abuse of notation: $\Wts^{d+1} y$ is defined to be $\objvec{y}{d+1}$.

\noindent $^\ddag$\ Such an index must exist: $\Y^t \neq x$ because $\Y^t \in \candidates_t$ and therefore $\Wts^{1..t} \Y^t > \Wts^{1..t} x$.}\\

\ignore{
\textbf{XXX Take a shot at explaining this algorithm in words. Tell the reader to forget about the ``else'' clause, for god's sake. XXX}\\}

It is clear that the index vector $\J$ and the \Boxx list $\BoxL$ output by $\Trans$ have the form we claimed.  We now verify that $\Trans(x,\Wts)$ indeed outputs a proper diagonalization matrix $\A$:
\begin{proposition} \label{prop:diagonalizes} The matrix $\A$ output by $\Trans(x,\Wts)$ always diagonalizes $x$ on $\J$.
\end{proposition}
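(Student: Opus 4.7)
The plan is to walk through the four defining conditions of a diagonalization (the all-zero rows for $j \notin \J$, and the three conditions in~\eqref{eqn:diag-conds} for $j = \J_u$) and read each one directly off the execution of $\Trans(x,\Wts)$. The first condition is automatic: the final assignment of $\A$ sets $\A^j = 0$ whenever $j$ does not appear in $\J$. The substantive work is to verify the conditions in~\eqref{eqn:diag-conds}, which all reduce to properties of the matrix $\Y$ that $\Trans$ builds up during its descending loop over $t$, since for $j = \J_u$ we have $\A^j_t = Y^j_t$.

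For the $\bot$-pattern condition ($\A^j_t = \bot$ iff $\J_t = \bot$), I would argue that column $Y_t$ is touched only inside the ``if $\candidates_t \neq \emptyset$'' branch of iteration $t$, which is exactly the branch that also assigns a non-$\bot$ value to $\J_t$; since $\Y$ starts all-$\bot$ and is otherwise overwritten with a column from $\allitems \subseteq \zo^n$, every entry of $Y_t$ lies in $\zo$ when $\J_t \neq \bot$ and remains $\bot$ otherwise. The diagonal condition $\A^j_u = \overline{x^j}$ is then immediate from the defining property of $\J_u$: it is chosen as the least index at which $\Y_u$ disagrees with $x$, hence $Y^j_u \in \zo \setminus \{x^j\} = \{\overline{x^j}\}$.

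The one condition that requires a small inductive argument is $\A^j_t = x^j$ for $t < u$ with $\J_t \neq \bot$. I would prove it by tracking the decreasing chain $\subitems_{d+1} \supseteq \subitems_d \supseteq \cdots \supseteq \subitems_1$, whose monotonicity is immediate from the update in line~\ref{subitems-def} (which either intersects with additional constraints or leaves the set unchanged). The decisive fact is that at iteration $u$ the update explicitly enforces $z^{\J_u} = x^{\J_u}$, i.e., $z^j = x^j$, on every $z \in \subitems_u$. Since any later iteration $t < u$ chooses $\Y_t \in \candidates_t \subseteq \subitems_{t+1} \subseteq \subitems_u$, this constraint propagates and forces $Y^j_t = x^j$. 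The only genuine obstacle here is book-keeping: the loop runs with $t$ \emph{decreasing}, so the ``sub-diagonal'' positions $t < u$ correspond to iterations occurring \emph{after} iteration $u$, which is precisely why the constraint pinned down in $\subitems_u$ controls $\Y_t$ for all subsequent $t < u$.
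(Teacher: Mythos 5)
Your proof is correct and follows essentially the same approach as the paper's: both peel off the trivial all-zero-row condition, then verify the three conditions of~\eqref{eqn:diag-conds} by reading them off the construction of $\Y$, with the third condition handled by the same containment chain $\Y_t \in \candidates_t \subseteq \subitems_{t+1} \subseteq \subitems_u$ and the constraint $z^{\J_u} = x^{\J_u}$ imposed on $\subitems_u$ at line~\ref{subitems-def}. The only difference is expository: you spell out the monotonicity of the $\subitems$-chain and the ``descending loop'' book-keeping, which the paper leaves implicit.
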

\begin{proof}
  At the end of the algorithm, by definition row $\A^j$ is all zeros if $j$ does not appear in $\J$.  Thus it remains to analyze each row $\A^{\J_u}$, where $u \in [d]$ is such that $\J_u \neq \bot$. By definition, we have $\A^{\J_u}_t = \Y^{\J_u}_t$ for each $t \in [d]$.  Thus we need to verify the three conditions in~\eqref{eqn:diag-conds} for $\Y^{\J_u}_t$. First, $\Y^{\J_u}_t = \bot$ if and only if $\Y_t$ was ``not defined'' during iteration $t$ of the algorithm (i.e., if $\candidates_t = \emptyset$), which occurs precisely when $\J_t = \bot$.  Next, $\Y^{\J_u}_{u} = \overline{x^j}$ by definition of $\J_{u}$. Finally, because of line~\eqref{subitems-def} in $\Trans$ we have that $z^{\J_u} = x^{\J_u}$ for $z \in \subitems_u$.  Thus for any $t < u$ where $\J_t \neq \bot$ (and thus $\candidates_t \neq \emptyset$), we have $\Y^{\J_u}_t = x^{\J_u}$ because $\Y^{J_u} \in \candidates_t \subseteq \subitems_{t+1} \subseteq \subitems_u$.
\end{proof}

We also record another simple observation:
\begin{proposition} \label{prop:same-same}
  Given an execution of $\Trans(x,\Wts)$, any two \itms in $\subitems_t$ have the same $\J_t$'th coordinate, the same $\J_{t+1}$'th coordinate, \dots, and the same $\J_d$'th coordinate (excluding the cases $t \leq u \leq d$ where $\J_u = \bot$).
\end{proposition}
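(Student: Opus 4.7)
The plan is to prove the proposition by downward induction on $t$, running from $t=d$ down to $t=1$, exploiting the two-part definition of $\subitems_t$ in line~\ref{subitems-def} of $\Trans$ together with the chain of inclusions $\subitems_1 \subseteq \subitems_2 \subseteq \cdots \subseteq \subitems_{d+1}$ produced by the algorithm.

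For the base case $t = d$, there are two sub-cases. If $\J_d = \bot$, then the claim is vacuous (every index $u$ with $t \leq u \leq d$ has $\J_u = \bot$ and is excluded). If $\J_d \neq \bot$, then $\candidates_d \neq \emptyset$, so $\subitems_d = \{z \in \subitems_{d+1} : \Wts^{d+1} z > \Wts^{d+1} \Y_d \text{ and } z^{\J_d} = x^{\J_d}\}$; in particular every $z \in \subitems_d$ satisfies $z^{\J_d} = x^{\J_d}$, so any two elements of $\subitems_d$ agree on coordinate $\J_d$.

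For the inductive step, assume the claim holds for $\subitems_{t+1}$, i.e.\ that any two \itms in $\subitems_{t+1}$ agree on each coordinate $\J_u$ for $t+1 \leq u \leq d$ with $\J_u \neq \bot$. If $\J_t = \bot$ then $\candidates_t = \emptyset$ and $\subitems_t = \subitems_{t+1}$, so the inductive hypothesis immediately gives the conclusion (the coordinate $\J_t$ is excluded from the claim). If instead $\J_t \neq \bot$, then line~\ref{subitems-def} sets $\subitems_t = \{z \in \subitems_{t+1} : \Wts^{t+1} z > \Wts^{t+1} \Y_t \text{ and } z^{\J_t} = x^{\J_t}\}$, which is a subset of $\subitems_{t+1}$. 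Every $z \in \subitems_t$ therefore inherits $z^{\J_u} = x^{\J_u}$ for each $u > t$ with $\J_u \neq \bot$ (by the inductive hypothesis applied to any two elements of $\subitems_t \subseteq \subitems_{t+1}$), and in addition satisfies $z^{\J_t} = x^{\J_t}$ by the defining condition. Thus any two \itms in $\subitems_t$ agree on $\J_t, \J_{t+1}, \dots, \J_d$ excluding $\bot$ entries, completing the induction.

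No obstacle is expected here: the statement is essentially a bookkeeping observation about the nested definitions of $\subitems_t$, and the only subtlety is remembering to handle the $\J_t = \bot$ branch where $\subitems_t$ is carried over unchanged from $\subitems_{t+1}$. Everything follows from unwinding the relevant lines of the algorithm.
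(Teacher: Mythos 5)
Your proof is correct and follows essentially the same route as the paper's: it rests on the two observations that the definition of $\subitems_t$ forces $z^{\J_t} = x^{\J_t}$, and that the nesting $\subitems_t \subseteq \subitems_{t+1} \subseteq \cdots \subseteq \subitems_d$ propagates the agreement on later coordinates; the paper states this directly while you wrap it in a downward induction, which is just a difference in bookkeeping. One small phrasing nit: in the inductive step you assert $z^{\J_u} = x^{\J_u}$ for $u > t$, but your stated hypothesis only gives pairwise agreement among items of $\subitems_{t+1}$, not agreement with $x$ --- the stronger statement is in fact true (it follows from line~\ref{subitems-def}), so either cite that line directly or strengthen the inductive hypothesis to say the common value is $x^{\J_u}$.
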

\begin{proof}
  For a fixed $t$ with $\J_t \neq \bot$, the fact that all \itms in $\subitems_t$ have the same $\J_t$'th coordinate follows immediately from the definition of $\subitems_t$.  The claim for coordinates $\J_{t+1}, \dots, \J_d$ follows from the fact that $\subitems_t \subseteq \subitems_{t+1} \subseteq \cdots \subseteq \subitems_{d}$.
\end{proof}

This proposition combines with our definition of \masking matrices in a crucial way:
\begin{named}{\Masking Lemma}  Given an execution of $\Trans(x, \Wts)$, for any $t \in [d]$ and $z, z' \in \subitems_t$,
\[
\Wts^{t} z > \Wts^{t} z' \quad\Leftrightarrow\quad (\WJbar)^{t} z > (\WJbar)^{t} z'.
\]
\end{named}
\begin{proof}
We have
\[
\Wts^{t} (z - z') = (\WJbar)^{t} (z - z') + (\WJ)^{t} (z - z').
\]
By definition of $\free{\J}$, the row vector $(\WJ)^{t}$ has nonzero entries only in indices $\J_{t}$, $\J_{t+1}$,~\dots,~$\J_{d}$ (excluding those $\J_u$'s which are~$\bot$).  But by Proposition~\ref{prop:same-same}, $z$ and $z'$ agree on these indices. Hence $(\WJ)^{t} (z - z') = 0$, and therefore $\Wts^{t} (z - z') = (\WJbar)^{t} (z - z')$.  The lemma follows.
\end{proof}

Finally, our proof of the key Uniqueness Lemma in Section~\ref{sec:uniqueness} will rely on the following simpler uniqueness claim:
\begin{named}{\Transcript--Determines--PO Lemma}  Suppose that we run $\Trans(x, \Wts)$, where $\Wts$ is an outcome for the \weights such that $x$ is Pareto optimal and such that $\OK$ occurs. Then at the end of the run, $x$ is uniquely defined by being the $z \in \subitems_1$ with maximal $\Wts^1 z$.
\end{named}
We remark that the assumption that $\OK(\Wts)$ occurs is stronger than necessary; we only need that $\Wts^i y \neq \Wts^i y'$ for all $i \in [d]$ and distinct $y, y' \in \allitems$ (an event that occurs with probability~$1$).
\begin{proof}
We make the following two claims about the execution of $\Trans(x, \Wts)$:

Claim 1:  For each $t \in [d+1]$ it holds that $x$ is not $t$-dominated by any $z \in \subitems_t$.

Claim 2: $x$ must be in $\subitems_1$.

\noindent Assuming these claims, the lemma follows immediately:  $x \in \subitems_1$ by claim~$2$, and no $z \in \subitems_1$ has $\Wts^{1} z \geq \Wts^{1} x$ by claim~$1$.

We begin by proving Claim~$1$.  For $t = d+1$, this follows immediately from the definition of $x$ being Pareto optimal.  For smaller $t$, let us consider the
$t$'th iteration of ``For'' loop, in which $\subitems_t$ is defined. We need to consider two cases corresponding to the ``If'' condition.  First suppose $\candidates_t \neq \emptyset$, so lines~\eqref{Y-def}---\eqref{subitems-def} are executed.  Now if there were some $z$ in the newly defined $\subitems_{t}$ which $t$-dominated $x$, then it would satisfy $\Wts^{t+1} z > \Wts^{t+1} \Y^t$ and $\Wts^{1..t} z \geq \Wts^{1..t} x$.  Since the $\OK$ event holds, the latter inequality can be strengthened to $\Wts^{1..t} z > \Wts^{1..t} x$.  But this means $z$ must be in the set $\candidates_t$.  Since $\Wts^{t+1} z > \Wts^{t+1} \Y^t$, we have a contradiction with how $\Y^t$ was chosen in line~\eqref{Y-def}.  We now consider the second case, that $\candidates_t = \emptyset$.  In this case, $\subitems_t = \subitems_{t+1}$.  Now by definition of $\candidates_t = \emptyset$, there is no $z \in \subitems_{t+1} = \subitems_t$ which has $\Wts^{1..t} z > \Wts^{1..t} x$.  Since the $\OK$ event occurs, we can strengthen this statement to say that no $z \in \subitems_t$ can even have $\Wts^{1..t} z \geq \Wts^{1..t} x$, as needed.

We now prove Claim~$2$.  Specifically, we show that $x \in \subitems_t$ for all $t \in [d+1]$ by (downward) induction on $t$.  The base case, that $x \in \subitems_{t+1}$, hold by definition. Assume then that $x \in \subitems_{t+1}$ for some $t \in [d]$.  Consider now the $t$'th iteration of the ``For'' loop.  If the ``If'' condition does not hold then $\subitems_t = \subitems_{t+1} \ni x$, as needed.  Assume then that lines~\eqref{Y-def}---\eqref{subitems-def} are executed.  To show $x \in \subitems_t$ it suffices to show that $\Wts^{t+1} x > \Wts^{t+1} \Y_t$.  If this is not true, then $\Wts^{t+1} \Y_t \geq \Wts^{t+1} x$, and $\Wts^{1..t} \Y_t \geq \Wts^{1..t} x$ also, since $\Y_t \in \candidates_t$.  But that means that $\Y_t \in \subitems_{t+1}$ $(t+1)$-dominates $x$, contradicting Claim~$1$.
\end{proof}

\section{The Uniqueness Lemma} \label{sec:uniqueness}

Let us restate the Uniqueness Lemma.
\begin{named}{Uniqueness Lemma}
Fix a \transcript $(\J,\A,\BoxL)$ and fix the outcome of $\bWJbar$.  Then there exists at most one $x \in \allitems$ such that the event
\begin{equation} \label{eqn:uniq-event}
\bigl(x \in \PO\bigr) \wedge \bigl(\Trans(x,\bWts) = (\J,\A,\BoxL)\bigr) \wedge \OK
\end{equation}
can occur.  Here the event's randomness is just the draw of $\bWJ$.
\end{named}

We prove the Uniqueness Lemma in a roundabout way.  Specifically, we introduce a second deterministic algorithm called $\Recon$, which takes as input a \transcript $(\J,\A,\BoxL)$ and fixed values $\WJbar$, and outputs \anitem $\ul{x} \in \allitems$ (or possibly `FAIL').  Instead of the Uniqueness Lemma as stated, we prove the following:

\begin{named}{Uniqueness Lemma$\boldsymbol{'}$}
Let \itm $x \in \allitems$ and \weight matrix $\Wts$ be such that $x$ is Pareto optimal and such that event $\OK$ occurs.  Assume further that $\Trans(x,\Wts) = (\J, \A, \BoxL)$.  Then \linebreak $\Recon((\J, \A, \BoxL), (\WJbar))$ outputs $x$.
\end{named}

This immediately implies the Uniqueness Lemma, as follows:  Fix a \transcript $(\J, \A, \BoxL)$ and an outcome $\bWJbar = \WJbar$.  Suppose there exist \itms $x, x' \in \allitems$ for which event~\eqref{eqn:uniq-event} can occur (with possibly different outcomes for $\bWJ$).  Then Uniqueness~Lemma$'$ tells us that the output of $\Recon((\J, \A, \BoxL), (\WJbar))$ is both $x$ and $x'$; hence $x = x'$.

The remainder of this section is devoted to the proof of Uniqueness~Lemma$'$.  We begin by defining the algorithm $\Recon$.

\fbox{\parbox{6.0in}{
  $\Recon((\J, \A, \BoxL), (\WJbar)) \co$
\texttt{
\begin{enumerate} \itemsep 0pt
  \item Set $\ul{\subitems_{d+1}} = \allitems$.
  \item Initialize $\ul{\Y}$ to the $n \times d$ matrix where every entry is $\bot$.
  \item For $t = d, d-1, d-2, \dots, 1$:
  \item \quad If $\J_t \neq \bot$,
  \item \quad \quad Write $b \in (\eps \Z)^t$ for the base point of $\Bx_t$. \label{eqn:write-b}
  \item \quad \quad Set $\ul{\candidates_t'} = \{z \in \ul{\subitems_{t+1}} : (\WJbar)^{1..t} z > b\ \text{and}\ z^j = A^j_t\ \forall\;\text{indices } j \in \J\}$.
  \item \quad \quad Set $\ul{\Y_t}$ to be the $y \in \ul{\candidates_t'}$ for which $(\WJbar)^{t+1} y$ is maximal.$^\ast$
  \item \quad \quad Set $\ul{\subitems_t} = \{z \in \ul{\subitems_{t+1}} : (\WJbar)^{t+1} z > (\WJbar)^{t+1} \ul{\Y_t}\ \text{and}\ z^{\J_t} \neq \ul{\Y_t}^{\J_t}\}$. \label{eqn:recon-subs}
  \item \quad Else
  \item \quad \quad Set $\ul{\subitems_{t}} = \ul{\subitems_{t+1}}$.
  \item \quad End If
  \item End For
  \item Output the $\ul{x} \in \ul{\subitems_1}$ for which $(\WJbar)^{1} \ul{x}$ is maximal. \label{last-line}
\end{enumerate}}
}}

\medskip

\noindent {\footnotesize $^\ast$\ Some comments about this line.  First, if $u = d$ then we interpret $(\WJbar)^{d+1} y$ to mean $\objvec{y}{d+1}$.  Second, regarding maximality, we break ties by lexicographic order.  Third, for some inputs to $\Recon$ it is possible that the set $\ul{\candidates_t'}$ is empty; in this case one can think of $\Recon$ as halting and outputting `FAIL'. However we will only be analyzing $\Recon$ on inputs where this provably never happen.  Finally, the first remark here also applies to line~\eqref{eqn:recon-subs} and the second and third remarks here also apply to line~\eqref{last-line}.}

\bigskip

To prove Uniqueness~Lemma$'$, we fix $x$ and $\Wts$ such that $x$ is Pareto optimal and such that event $\OK$ occurs.  We further suppose we have executed $\Trans(x,\Wts)$ producing $(\J, \A, \BoxL)$, and then executed $\Recon((\J, \A, \BoxL), (\WJbar))$ producing $\ul{x}$.  Our goal is to show that $\ul{x} = x$.  To do this, we will analyze the internal variable assignments made in the executions of $\Trans$ and $\Recon$.  More specifically, the main task will be to show the following claim asserting that $\ul{\subitems_t} = \subitems_t$ for all $t \in [d+1]$.  Once we show this, it will be easy to conclude that $\ul{x} = x$ also.

\begin{claim} $\ul{\subitems_t} = \subitems_t$ for all $t \in [d+1]$. \label{claim:unique}
\end{claim}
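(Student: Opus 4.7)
The plan is to prove the claim by downward induction on $t$, starting from the trivial base case $\ul{\subitems_{d+1}} = \subitems_{d+1} = \allitems$. For the inductive step, assume $\ul{\subitems_{t+1}} = \subitems_{t+1}$. When $\J_t = \bot$, both algorithms simply copy the previous set, so equality is immediate. The substantive case is $\J_t \neq \bot$, which I would reduce to the sub-claim $\ul{\Y_t} = \Y_t$. Once this is known, $\ul{\subitems_t}$ and $\subitems_t$ match because the \Masking Lemma converts the $(\WJbar)^{t+1}$-comparisons appearing in $\Recon$ into $\Wts^{t+1}$-comparisons (valid since all vectors involved lie in $\subitems_{t+1}$), and because the diagonalization identity $\Y_t^{\J_t} = \overline{x^{\J_t}}$ turns the constraint $z^{\J_t} \neq \ul{\Y_t}^{\J_t}$ into $z^{\J_t} = x^{\J_t}$.

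The heart of the argument is establishing $\ul{\Y_t} = \Y_t$. Set $y = \Y_t$. The pivotal identity I would derive is $(\free{\J} \circ \Wts)\A_t = \WJ y$ as $d$-vectors: for each $i \in [d]$, both sides equal $\sum_{u \geq i,\ \J_u \neq \bot} \Wts^i_{\J_u}\,y^{\J_u}$, once one knows $\A^{\J_u}_t = y^{\J_u}$ for every $u$ with $\J_u \neq \bot$. This I verify in three cases: for $u = t$, the choice of $\J_t$ gives $\A^{\J_t}_t = \overline{x^{\J_t}} = \Y_t^{\J_t} = y^{\J_t}$; for $u < t$, $\A^{\J_u}_t = \Y^{\J_u}_t = y^{\J_u}$ directly from how $\A$ is built from $\Y$ in $\Trans$; and for $u > t$, the off-diagonal prescription gives $\A^{\J_u}_t = x^{\J_u}$, which equals $y^{\J_u}$ by Proposition~\ref{prop:same-same} applied to $x, y \in \subitems_{t+1}$.

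With this identity, the box condition $b^i \leq \Wts^i x - (\WJ)^i y < b^i + \eps$ (for $i \in [t]$) becomes a concrete anchor. To show $y \in \ul{\candidates_t'}$, I would check: $y \in \subitems_{t+1} = \ul{\subitems_{t+1}}$; $y^j = A^j_t$ for all $j \in \J$ (from the three-case analysis above, with $j \notin \J$ forced to $A^j_t = 0$ but irrelevant to $\Recon$'s check); and $(\WJbar)^i y > b^i$ for $i \in [t]$, obtained by writing $\Wts^i y = (\WJbar)^i y + (\WJ)^i y$, using $\Wts^i y > \Wts^i x$ (since $y \in \candidates_t$), and applying the lower box bound.

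Next I would prove $\ul{\candidates_t'} \subseteq \candidates_t$: for any $z \in \ul{\candidates_t'}$, the constraint $z^j = A^j_t$ on $\J$-indices together with the identity above forces $(\WJ)^i z = (\WJ)^i y$ for $i \in [t]$, so $\Wts^i z = (\WJbar)^i z + (\WJ)^i y$; combining $(\WJbar)^i z > b^i$ with the upper box bound then gives $\Wts^i z > \Wts^i x - \eps$. Since $z^{\J_t} = \overline{x^{\J_t}}$ forces $z \neq x$, the $\OK$ event upgrades this to $\Wts^i z > \Wts^i x$, placing $z$ in $\candidates_t$. Consequently $y$, being the $\Wts^{t+1}$-maximizer over $\candidates_t$, is a fortiori maximal over $\ul{\candidates_t'}$, and the \Masking Lemma transfers this to $(\WJbar)^{t+1}$-maximality, yielding $\ul{\Y_t} = y$. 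The main obstacle is the bookkeeping underlying $(\free{\J} \circ \Wts)\A_t = \WJ y$; once that identity is in hand, the remainder is straightforward arithmetic together with the $\OK$-powered upgrade of weak inequalities to strict ones.
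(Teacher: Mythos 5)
Your proof is correct and follows the same essential strategy as the paper: downward induction on $t$ with the trivial $\J_t = \bot$ case, using the diagonalization structure of $\A$ to establish a masking identity, the $\OK$ event to upgrade weak box inequalities to strict comparisons, and the \Masking Lemma to transfer $\Wts$-comparisons to $\WJbar$-comparisons. The organizational differences are minor: the paper introduces the translated base point $\widehat{b} = (\WJ)^{1..u}\A_u + b$ and proves the set equality $\candidates_u' = \ul{\candidates_u'}$ outright, while you instead prove the two memberships $\Y_t \in \ul{\candidates_t'}$ and $\ul{\candidates_t'} \subseteq \candidates_t$, which suffice to identify $\ul{\Y_t}$; both are valid routes to the same conclusion. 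One place you did more work than needed: the three-case verification of $\A^{\J_u}_t = \Y_t^{\J_u}$ (using the abstract diagonalization conditions and Proposition~\ref{prop:same-same}) is correct but superfluous, since the $\Trans$ algorithm literally sets $\A^j_u = \Y^j_u$ for every $j$ appearing in $\J$ and every $u$, giving the identity in one line. Finally, when invoking the \Masking Lemma to transfer $(\cdot)^{t+1}$-maximality and to compare $\subitems_t$ with $\ul{\subitems_t}$, you should flag the $t = d$ boundary case separately, since $\Wts^{d+1}$ and $(\WJbar)^{d+1}$ are not real rows of those matrices but are both defined to be $\objvec{\cdot}{d+1}$, so equality there is by convention rather than by the \Masking Lemma.
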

\begin{proof} The proof is by (downward) induction.  The base case is clear, as $\ul{\subitems_{d+1}} = \subitems_{d} = \allitems$. For the induction we assume that $\ul{\subitems_{u+1}} = \subitems_{u+1}$ for some $u \in [d]$. We now show that $\ul{\subitems_{u}} = \subitems_u$.  In doing so, we will also show that $\ul{\Y_u} = \Y_u$.  The set $\ul{\candidates_u'}$ will not necessarily equal $\ul{\candidates_u}$, but will be a subset of it.

We henceforth restrict attention to the the $t = u$ iteration of the ``For'' loop in the execution of $\Trans$ and $\Recon$, since this is when variables $\subitems_u$ and $\ul{\subitems_u}$ were set.  We consider two cases depending on whether or not $\J_u = \bot$.  In the easy case that $\J_u = \bot$, $\Trans$ must have assigned $\subitems_u = \subitems_{u+1}$, and certainly $\Recon$ assigned $\ul{\subitems_u} = \ul{\subitems_{u+1}}$.  By induction, $\ul{\subitems_{u+1}} = \subitems_{u+1}$, and hence $\ul{\subitems_u} = \subitems_u$ as required.

The remainder of the claim's proof is devoted to the case that $\J_u \neq \bot$, in which case $\Trans$ executed its lines~\eqref{Y-def}---\eqref{subitems-def} and $\Recon$ executed its lines~\eqref{eqn:write-b}---\eqref{eqn:recon-subs}. The $\Bx_u$ referred to in $\Recon$'s line~\eqref{eqn:write-b} is defined at the end of $\Trans$ to be \tbox{u} containing $\Wts x - (\WJ)\A_u$.  By definition, this means the base point $b \in (\eps \Z)^u$ used by $\Recon$ is such that
\newcommand{\newb}{{\widehat{b}}}
\begin{align}
 \Wts^{1..u} x - (\WJ)^{1..u} \A_u \ &\in\  b + [0,\eps)^u \nonumber\\
\Rightarrow  \qquad \Wts^{1..u} x \ &\in\  \newb + [0,\eps)^u, \nonumber\\
&\text{where } \newb = (\WJ)^{1..u} \A_u + b. \nonumber
\end{align}
Recall that $\Trans$ defines $\candidates_u = \{z \in \subitems_{u+1} \colon \Wts^{1..u} z > \Wts^{1..u} x\}$.  In fact, because we have assumed $\Wts$ causes event $\OK$ to occur, we may deduce
\begin{equation}
\candidates_u = \{z \in \subitems_{u+1} \colon \Wts^{1..u} z > \newb\}. \label{eqn:ok-dom}
\end{equation}
For if there were some $z \in \subitems_{u+1}$ and $i \in [u]$ with $\newb^i < \Wts^{i} z \leq \newb^i + \eps$, we would have $|\Wts^i z - \Wts^i x| \leq \eps$, contradicting the occurrence of $\OK$. (The reader may note that this deduction is precisely the reason we introduced the event $\OK$.)

Next, recall that $\Trans$ defines $\Y_u$ to be the $y \in \candidates_u$ for which $\Wts^{u+1} y$ is maximal (and this maximizer is unique since we assume $\OK$ occurs).  Since $\Y^j_u = \A^j_u$ for all indices $j$ appearing in $\J$, we must also have that $\Y_u$ is the maximizer of $\Wts^{u+1} y$ among all~$y$ within the following (nonempty) subset of $\candidates_u$:
\begin{equation} \label{eqn:candu1}
\candidates_u' \coloneqq \{z \in \subitems_{u+1} \colon \Wts^{1..u} z > \newb \ \text{and} \ z^j = \A_u^j \text{ for all indices $j \in \J$}\}.
\end{equation}
Observe that
\[
\Wts^{1..u} z > \newb  \quad\Leftrightarrow\quad (\WJbar)^{1..u} z + (\WJ)^{1..u} z > (\WJ)^{1..u} \A_u + b.
\]
Since all $z \in \candidates_u'$ agree with $\A_u$ in the indices from $\J$, and since $\free{\J}$ is nonzero only in columns whose indices are in $\J$, we have that
\begin{equation} \label{eqn:same}
(\WJ) z = (\WJ) \A_u \quad \text{for every $z \in \candidates_u'$.}
\end{equation}
Therefore an equivalent definition to~\eqref{eqn:candu1} is
\[
\candidates_u' = \{z \in \subitems_{u+1} \colon (\WJbar)^{1..u} z > b \ \text{and} \ z^j = \A_u^j \text{ for all indices $j \in \J$}\}.
\]
But $\subitems_{u+1} = \ul{\subitems_{u+1}}$ by induction, and hence $\candidates_u' = \ul{\candidates_u'}$.

The remainder of the proof of the claim now follows fairly easily using the \Masking~Lemma from Section~\ref{sec:trans}.  Recall that $\Y_u$ is the maximizer of $\Wts^{u+1} y$ among all $y \in \candidates_u'$.  On the other hand, $\Recon$ defines $\ul{\Y_u}$ to be the $y \in \ul{\candidates_u'} = \candidates_u'$ with maximal $(\WJbar)^{u+1} y$.  We claim that $\ul{\Y_u} = \Y_u$.  If $u = d$ then this is immediate, as both $\Wts^{d+1} y$ and $(\WJbar)^{d+1} y$ are interpreted as $\objvec{y}{d+1}$.  If $u < d$, this follows immediately from the \Masking~Lemma, using the fact that $\candidates_u' \subseteq \subitems_{u+1}$.

Finally, we wish to show that $\ul{\subitems_u} = \subitems_u$. Recall that
\begin{align*}
\subitems_u &= \{z \in \subitems_{u+1} \colon \Wts^{u+1} z > \Wts^{u+1} \Y_u \text{ and } z^{\J_u} = x^{\J_u}\}, \\
\text{and} \qquad \ul{\subitems_u} &= \{z \in \ul{\subitems_{u+1}} : (\WJbar)^{u+1} z > (\WJbar)^{u+1} \ul{\Y_u}\ \text{and}\ z^{\J_u} \neq \ul{\Y_u}^{\J_u}\} \\
&= \{z \in \subitems_{u+1} : (\WJbar)^{u+1} z > (\WJbar)^{u+1} \Y_u\ \text{and}\ z^{\J_u} = x^{\J_u}\};
\end{align*}
in this last deduction we used $\ul{\subitems_{u+1}} = \subitems_{u+1}$ (by induction), $\ul{\Y_u} = \Y_u$, and $\overline{\Y_u^{\J_u}} = x^{\J_u}$ (which follows from the definition of $\J_u$ in $\Trans$).  If $u = d$ then $\ul{\subitems_u} = \subitems_u$ again follows from the interpretation $\Wts^{d+1} z = (\WJbar)^{d+1} z  = \objvec{z}{d+1}$.  If $u < d$ then $\ul{\subitems_u} = \subitems_u$ again follows from the \Masking~Lemma, noting that $z, \Y_u \in \subitems_{u+1}$.  This completes the proof of the induction and hence the claim.
\end{proof}

Having proven Claim~\ref{claim:unique}, it is easy to complete the proof of Uniqueness~Lemma$'$, i.e., to show $\ul{x} = x$.  Since the \weights matrix $\Wts$ is assumed to make $x$ Pareto optimal and make $\OK$ occur, the \Transcript--Determines--PO Lemma from Section~\ref{sec:trans} implies that $x$ is the \itm $z \in \subitems_1$ with maximal $\Wts^1 z$.  On the other hand, $\ul{x}$ is defined to be the \itm $z \in \ul{\subitems_1} = \subitems_1$  with maximal $(\WJbar)^1 z$.  But these maximizers are equal by the \Masking~Lemma.

\section{The Boundedness Lemma} \label{sec:boundedness}

In this section we restate and prove the Boundedness Lemma.
\begin{named}{Boundedness Lemma} For every fixed $(\J,\A,\BoxL)$, outcome of $\bWJbar$, and $x \in \allitems$, it holds that
\[
\Pr_{\bWJ}[\Trans(x,\bWts) = (\J,\A,\BoxL)] \leq \pdfbd^{\dim(\BoxL)} \eps^{\dim(\BoxL)}.
\]
\end{named}
\begin{proof}
As in the proof of the Uniqueness Lemma we fix the \transcript $(\J,\A,\BoxL)$ and the outcome $\bWJbar = \WJbar$.  Unlike the proof of that lemma, we also fix $x \in \allitems$.  By Proposition~\ref{prop:diagonalizes} we may assume that matrix $\A$ diagonalizes $x$ on $\J$; otherwise the probability of $\Trans(\bWJ) = (\J,\A,\BoxL)$ is~$0$.

\newcommand{\IN}{\mathbf{IN}}
Write $\BoxL = (\Bx_1, \dots, \Bx_d)$, where each $\Bx_t$ is either a \tbox{t} or is $\bot$ (if $\J_t = \bot$). For each $t \in [d]$ with $\J_t \neq \bot$ we define the event
\[
\IN_t = \text{``}\bWts x - (\bWJ) \A_t \in \Bx_t\text{''},
\]
where again, the randomness of these events is just the draw of $\bWJ$. We may complete the proof by showing
\begin{equation} \label{eqn:bdd-bd}
\Pr_{\bWJ}\left[ \bigwedge_{t \in [d] : \J_t \neq \bot} \IN_t \right] \leq \pdfbd^{\dim(\BoxL)} \eps^{\dim(\BoxL)}.
\end{equation}
Recall that
\[
\free{\J}^i_j = \begin{cases}
                    1 &  \text{if $j = \J_t \in [n]$ for some $t \in [d]$ and $i \leq t$,}\\
                    0 & \text{otherwise.}
                \end{cases}
\]
We will imagine drawing the random entries of $\bWJ$ in $d$ stages.  In the $t$'th stage we draw the $t$ entries $\bWJ^{1..t}_{\J_t}$, unless $\J_t = \bot$ in which case we ``skip'' the $t$'th stage.  By the independence of the entries, the following claim immediately implies~\eqref{eqn:bdd-bd}:\\

\noindent Claim:  Assume $t \in [d]$ has $\J_t \neq \bot$.  Suppose we have completed the first $t-1$ stages of drawing $\bWJ$.  Then whether the event $\IN_t$ occurs is determined in the $t$'th stage, and its probability is at most $\pdfbd^t \eps^t$.\\

To prove the claim we we write $b \in \R^t$ for the base point of $\Bx_t$ and observe that
\begin{align}
\IN_t \quad\Leftrightarrow\quad \bWts x - (\bWJ) \A_t \ &\in\  \Bx_t \nonumber\\
\Leftrightarrow\quad (\WJbar)^{1..t}x + (\bWJ)^{1..t}x - (\bWJ)^{1..t} \A_t \ &\in\  b + [0,\eps)^t \nonumber\\
\Leftrightarrow\quad (\bWJ)^{1..t}(x -\A_t) \ &\in\  \bigl(b - (\WJbar)^{1..t}x\bigr) + [0,\eps)^t. \label{eqn:hit-box}
\end{align}
Recalling the definition of $\free{\J}$ we see that for a fixed $i \in [t]$,
\begin{align}
(\bWJ)^{i}(x -\A_t) = &\sum_{i \leq u < t : \J_u \neq \bot} (\WJ)^i_{\J_u} (x - \A_t)^{\J_u} \label{eqn:W-less}\\
&+ \  (\bWJ)^i_{\J_t} (x - \A_t)^{\J_t} \label{eqn:W-equal} \\
&+ \  \sum_{u > t : \J_u \neq \bot} (\bWJ)^i_{\J_u} (x - \A_t)^{\J_u}. \label{eqn:W-greater}
\end{align}
Please note that in~\eqref{eqn:W-less} we have written $\WJ$ rather than $\bWJ$ because the entries $(\bWJ)^i_{\J_u}$ for $u < t$ have been fixed prior to the $t$'th stage.  The entries of $\bWJ$ appearing in~\eqref{eqn:W-equal} and~\eqref{eqn:W-greater}, however, are still to be drawn.

At this point it may seem as though the event $\IN_t$ as given in~\eqref{eqn:hit-box} depends not only on the entries $(\bWJ)^{1..t}_{\J_t}$ as stated in the claim, but also on the entries $(\bWJ)^{1..t}_{\J_u}$ for $u > t$.  But this is where we make a crucial  observation; indeed, the one which explains why we defined $\Trans$ to produce diagonalization matrices.  By definition of $\A$ diagonalizing $x$ on $\J$,
\[
(x - \A_t)^j = \begin{cases}
                   \pm 1 & \text{if $j = \J_t$,} \\
                       0 & \text{if $j = \J_u \in [n]$ for some $u > t$.}
               \end{cases}
\]
(If $j = \J_u \in [n]$ for some $u < t$ then we cannot say anything about $(x - \A_t)^j$, but we do not need to.)\ignore{ {\textbf XXXdiagram here would be niceXXX.}} Substituting this into~\eqref{eqn:W-equal} and~\eqref{eqn:W-greater}, we deduce that
\begin{equation} \label{eqn:free-weights}
(\bWJ)^{i}(x -\A_t) = \text{constant } \pm (\bWJ)^i_{\J_t}.
\end{equation}
In particular, the term~\eqref{eqn:W-greater} has dropped out; hence event~\eqref{eqn:hit-box} does not in fact depend on the entries $(\bWJ)^{1..t}_{\J_u}$ for $u > t$, as claimed.  Finally, substituting~\eqref{eqn:free-weights} into~\eqref{eqn:hit-box} we see that the event $\IN_t$ is equivalent to a conjunction of $t$ events of the form
\[
\pm (\bWJ)^i_{\J_t} \in [c_i, c_i + \eps)
\]
where the $c_i$'s are fixed constants.  Since the random variables $(\bWJ)^i_{\J_t}$ are independent and have pdf's bounded by $\pdfbd$, we conclude that the probability of $\IN_t$ is indeed at most $(\pdfbd \eps)^t$, as claimed.
\end{proof}

\section{The Counting Lemma}
Here we restate and prove the Counting Lemma.
\begin{named}{Counting Lemma}  For a fixed $n$ and $\eps$, the quantity
\begin{equation} \label{eqn:target}
\sum_{\substack{\text{\emph{possible \transcripts{}}} \\ (\J, \A, \BoxL)}} \pdfbd^{\dim(\BoxL)} \eps^{\dim(\BoxL)}
\end{equation}
is at most $2 \cdot (4d\pdfbd)^{d(d+1)/2} \cdot n^{2d}$.
\end{named}
\begin{proof}
For a given index vector $\J$ let us define the following quantities:
\newcommand{\countt}{\mathrm{count}}
\newcommand{\summ}{\mathrm{sum}}
\[
\countt(\J) = \#\{t  : \J_t \neq \bot\}, \quad \summ(\J) = \sum \{t  : \J_t \neq \bot\}, \quad \max(\J) = \max \{t  : \J_t \neq \bot\}.
\]
Observe that for a possible \transcript $(\J, \A, \BoxL)$, the quantity $\summ(\J)$ is identical to $\dim(\BoxL)$.  We may therefore express~\eqref{eqn:target} as
\begin{equation} \label{eqn:final-sum}
\sum_{\text{possible } \J} \pdfbd^{\summ(\J)} \eps^{\summ(\J)} \cdot \#\{(\A, \BoxL) \text{ s.t.\ } (\J,\A,\BoxL) \text{ is a possible \transcript}\}.
\end{equation}
Let us now count the pairs $(\A, \BoxL)$ that form possible \transcripts with $\J$.  By Proposition~\ref{prop:diagonalizes} we know that $\A$ must diagonalize some \itm $x$ on $\J$. There are $2^{\countt(\J)}$ choices for the values of $x^{j}$, for $j$ appearing in $\J$. These force some entries of $\A$; the remaining $\sum \{ t-1 : \J_t \neq \bot\} = \summ(\J) - \countt(\J)$ entries are free.  Thus there are
\begin{equation} \label{eqn:A-count}
2^{\countt(\J)} 2^{\summ(\J) - \countt(\J)} = 2^{\summ(\J)} \ \text{possible choices for $\A$.}
\end{equation}

As for $\BoxL$, let us first count the number of possibilities for $\Bx_{\max(\J)}$ (assuming $\max(\J)$ exists).  We write $m = \max(\J)$ for brevity; on first reading, one should think of $m$ as always being $d$. An execution of $\Trans(x,\Wts)$ which is consistent with $\J$ and $\A$ defines $\Bx_{m}$ to be the \tbox{m} containing the point $p = \Wts x - (\free{\J} \circ \Wts) \A_m$.  Since the entries of $\Wts$ are bounded in $[-1,1]$ always and since $\free{\J}$ contains at most $d$ nonzero entries, the point $p$ must lie in $[-n-d, n+d)^d$.\footnote{Proving that $p$ cannot have any coordinate exactly equal to $n+d$ is an exercise for the reader.} There are therefore at most $(2(n+d)/\eps)^m$ choices for the \boxx $\Bx_m$.

We could similarly upper-bound the number of choices for each remaining \tbox{t} by $(2(n+d)/\eps)^t$; however, this would lead to a final count whose dependence on $d$ was $n^{d + d(d+1)/2}$, rather than $n^{2d}$. To get the much better dependence of $n^{2d}$ we observe that once $\Bx_m$ is chosen, the remaining \tboxes{t} cannot be ``too far away'' because, like $\Bx_m$, they contain a point close to $\Wts x$.  More precisely, let $t < m$ be such that $\J_t \neq \bot$ and consider $\Bx_t$.  It is the \tbox{t} containing $\hat{p} = \Wts x - (\free{\J} \circ \Wts) \A_t$.  Now $p - \hat{p} = (\free{\J} \circ \Wts)(\A_t - \A_u)$, which means that $\|p - \hat{p}\|_\infty \leq d$.  It follows that given the choice of $\Bx_m$, there are at most $((2d+1)/\eps)^t$ choices for $\Bx_t$.  We conclude that the number of possible choices for $\BoxL$ is at most
\begin{multline*}
(2(n+d)/\eps)^{\max(\J)} \cdot \prod_{t < \max(\J) : \J_t \neq \bot} ((2d+1)/\eps)^t = \left(\frac{2(n+d)}{2d+1}\right)^{\max(\J)} \cdot \left(\frac{2d+1}{\eps}\right)^{\summ(\J)} \\ \leq \left(\frac{2(n+d)}{2d+1}\right)^{d} \cdot \left(\frac{2d+1}{\eps}\right)^{\summ(\J)}.
\end{multline*}
Combining this with~\eqref{eqn:A-count} and substituting into~\eqref{eqn:final-sum}, we upper-bound~\eqref{eqn:target} by
\[
\sum_{\text{possible } \J} \bigl(2(2d+1)\pdfbd\bigr)^{\summ(\J)} \bigl(2(n+d)/(2d+1)\bigr)^d.
\]
Finally, we simply upper-bound $\summ(\J)$ by $d(d+1)/2$ and the number of possible $\J$ by $(n+1)^d$.  We conclude that~\eqref{eqn:target} is at most
\[
(n+1)^d \bigl(2(2d+1)\pdfbd\bigr)^{d(d+1)/2} \bigl(2(n+d)/(2d+1)\bigr)^d = (4\pdfbd)^{d(d+1)/2} (d+1/2)^{d(d-1)/2} (n+1)^d (n+d)^d.
\]
One may check that $(d+1/2)^{(d-1)/2} (n+1) (n+d)  \leq 2^{1/d} d^{(d+1)/2} n^2$ for any $d \geq 1$ and $n \geq 3$ (which we may assume, as our final bound is always at least $2^3$).  Hence~\eqref{eqn:target} is indeed at most
\[
2(4d \pdfbd)^{d(d+1)/2} n^{2d},
\]
as claimed.
\end{proof}

\section{Conclusion}
There are several open problems that remain.  One intriguing problem is to show a lower bound for the expected number of Pareto optima in which the exponent on $n$ grows with $d$.  Currently we cannot rule out the possibility of an upper bound of the form $f(d, \pdfbd) n^2$; however we regard this possibility as unlikely.  We feel it is likely that there is a lower bound of at least $\Omega(n^d)$ for constant $d$ and $\pdfbd$; our intuition is partly based on the known lower bound of $\Omega(n^d)$ in the scenario of $2^n$ completely independent points uniformly distributed on $[-1,1]^{d+1}$.  

Another interesting open problem is whether our methods can be used to give improved upper bounds on the higher moments of the number of Pareto optima in the smoothed analysis model.  This is currently unclear; we know of no bounds that improve on those of R\"oglin and Teng~\cite{RT09}.  Finally, one could ask about reducing the factor of $(\pdfbd d)^{d(d+1)/2}$ in our bound, as well as whether our results extend to the case of \itms in $\{0, 1, 2, \dots, c\}^n$ for integer constants $c >1$.

\subsection{Acknowledgements}

Part of this research was done during a visit to Microsoft Research New England; we thank them for their hospitality.  We also thank Shang-Hua Teng and Ilias Diakonikolas for sharing their expertise. 

\newpage

\bibliographystyle{alpha}
\bibliography{pareto}

\begin{thebibliography}{ANRV07}

\bibitem[ANRV07]{ANRV07}
Heiner Ackermann, Alantha Newman, Heiko R\"oglin, and Berthold V\"ocking.
\newblock Decision-making based on approximate and smoothed {P}areto curves.
\newblock {\em Theoretical Computer Science}, 378(3):253--270, 2007.

\bibitem[Bei04]{Bei04}
Ren\'e Beier.
\newblock {\em Probabilistic Analysis of Discrete Optimization Problems}.
\newblock PhD thesis, Universit\"at des Saarlandes, 2004.

\bibitem[BKS01]{BKS01}
Stephan B\"orzs\"ony, Donald Kossmann, and Konrad Stocker.
\newblock The {S}kyline operator.
\newblock In {\em Proceedings of the 17th Annual International Conference on
  Data Engineering}, pages 421--430, 2001.

\bibitem[BKST78]{BKST78}
Jon Bentley, Hsiang-Tsung Kung, Mario Schkolnick, and Clark Thompson.
\newblock Random knapsack in expected polynomial time.
\newblock {\em Journal of the ACM}, 25(4):536--543, 1978.

\bibitem[BRV07]{BRV07}
Ren\'e Beier, Heiko R\"oglin, and Berthold V\"ocking.
\newblock The smoothed number of {P}areto optimal solutions in bicriteria
  integer optimization.
\newblock In {\em Proceedings of the 11th Annual Conference on Integer
  Programming and Combinatorial Optimization}, pages 53--67, 2007.

\bibitem[Buc89]{Buc89}
Christian Buchta.
\newblock On the average number of maxima in a set of vectors.
\newblock {\em Information Processing Letters}, 33(2):63--65, 1989.

\bibitem[BV04]{BV04b}
Ren\'e Beier and Berthold V\"ocking.
\newblock Random knapsack in expected polynomial time.
\newblock {\em Journal of Computer and System Sciences}, 69(3):306--329, 2004.

\bibitem[BV06]{BV06}
Ren\'e Beier and Berthold V\"ocking.
\newblock Random knapsack in expected polynomial time.
\newblock {\em Typical Properties of Winners and Losers in Discrete
  Optimization}, 35(4):855--881, 2006.

\bibitem[Deb01]{Deb01}
Kalyanmoy Deb.
\newblock {\em Multi-objective optimization using evolutionary algorithms}.
\newblock Wiley, 2001.

\bibitem[Dev80]{Dev80}
Luc Devroye.
\newblock A note on finding convex hulls via maximal vectors.
\newblock {\em Information Processing Letters}, 11(1):53--56, 1980.

\bibitem[DF89]{DF89}
Martin Dyer and Alan Frieze.
\newblock Probabilistic analysis of the multidimensional knapsack problem.
\newblock {\em Mathematics of Operations Research}, 14(1):162--176, 1989.

\bibitem[Dia10]{Dia10}
Ilias Diakonikolas.
\newblock {\em Approximation of Multiobjective Optimization Problems}.
\newblock PhD thesis, Columbia University, 2010.

\bibitem[Ehr05]{Ehr05}
Matthias Ehrgott.
\newblock {\em Multicriteria optimization}.
\newblock Springer, 2005.

\bibitem[GMS84]{GM84b}
Andrew Goldberg and Alberto Marchetti-Spaccamela.
\newblock On finding the exact solution of a zero-one knapsack problem.
\newblock In {\em Proceedings of the 16th Annual ACM Symposium on Theory of
  Computing}, pages 359--368, 1984.

\bibitem[Lue98]{Lue98}
George Lueker.
\newblock Average-case analysis of off-line and on-line {K}napsack problems.
\newblock {\em Journal of Algorithms}, 29(2):277--305, 1998.

\bibitem[NU69]{NU69}
George Nemhauser and Zev Ullmann.
\newblock Discrete dynamic programming and capital allocation.
\newblock {\em Management Science}, 15(9):494--505, 1969.

\bibitem[PY02]{PY02}
Christos Papadimitriou and Mihalis Yannakakis.
\newblock On the approximability of trade-offs and optimal access of web
  sources.
\newblock In {\em Proceedings of the 43rd Annual IEEE Symposium on Foundations
  of Computer Science}, pages 86--92, 2002.

\bibitem[RT09]{RT09}
Heiko R\"oglin and Shang-Hua Teng.
\newblock Smoothed analysis of multiobjective optimization.
\newblock In {\em Proceedings of the 50th Annual IEEE Symposium on Foundations
  of Computer Science}, pages 681--690, 2009.

\bibitem[RV07]{RV07}
Heiko R\"oglin and Berthold V\"ocking.
\newblock Smoothed analysis of integer programming.
\newblock {\em Mathematical Programming}, 110(1):21--56, 2007.

\bibitem[ST04]{ST04}
Daniel Spielman and Shang-Hua Teng.
\newblock Smoothed analysis of algorithms: {W}hy the simplex algorithm usually
  takes polynomial time.
\newblock {\em Journal of the ACM}, 51(3):385--463, 2004.

\bibitem[Ten10]{Ten01}
Shang-Hua Teng, 2010.
\newblock National Science Foundation award \#0964481. Abstract available at
  \texttt{http://www.nsf.gov/awardsearch/showAward.do?AwardNumber=0964481}.

\end{thebibliography}

\end{document}